\documentclass[a4paper,9pt]{amsart}
\usepackage{multicol}

\usepackage[a4paper]{geometry}
\usepackage{amsthm,hyperref,lmodern,mathrsfs}
\usepackage[utf8]{inputenc}
\usepackage[T1]{fontenc} 
\usepackage{graphicx}
\usepackage[english]{babel}
\usepackage{calc}
\usepackage{amsmath,amsthm}

\usepackage{url}
\usepackage{times, amssymb, amscd, mathrsfs, graphicx, color,soul}  
\usepackage{enumitem}

\selectlanguage{english}

\usepackage{tcolorbox}
\usepackage[tikz]{bclogo}

\usepackage{fancybox}

\newcommand*{\colorboxed}{}
\def\colorboxed#1#{\colorboxedAux{#1}}
\newcommand*{\colorboxedAux}[3]{%
  \begingroup
    \colorlet{cb@saved}{.}%
    \color#1{#2}%
    \boxed{%
      \color{cb@saved}%
      #3%
    }%
  \endgroup
}

\usepackage[indentafter]{titlesec}
\titleformat{name=\section}[runin]{}{\thetitle.}{0.5em}{\bfseries}
\titleformat{name=\subsection}[runin]{}{\thetitle.}{0.5em}{\itshape}



\newtheorem{theo}{Theorem}[section]

\newtheorem{lem}[theo]{Lemma}


%
%



 

 


\definecolor{darkred}{rgb}{0.9,0.1,0.1}

\title{Galton-Watson process and Bayesian inference: \\
A turnkey method for the viability study of small populations}

\author{B. Cloez, T. Daufresne, M. Kerioui, B. Fontez}

\date{ Compiled \today}





\begin{document}
\maketitle

\begin{abstract} \ \\
\textbf{1} Sharp prediction of extinction times is needed in biodiversity monitoring and conservation management.\\
\textbf{2} The Galton-Watson process is a classical stochastic model for describing population dynamics. Its evolution is like the matrix population model where offspring numbers are random. Extinction probability, extinction time, abundance are well known and given by explicit formulas. In contrast with the deterministic model, it can be applied to small populations.\\
\textbf{3} Parameters of this model can be estimated through the Bayesian inference framework. This enables to consider non-arbitrary scenarios.\\
\textbf{4} We show how coupling Bayesian inference with the Galton-Watson model provides several features: i) a flexible modelling approach with easily understandable parameters ii) compatibility with the classical matrix population model (Leslie type model) iii) A non-computational approach which then leads to more information with less computing iv) a non-arbitrary choice for scenarios, parameters... It can be seen to go one step further than the classical matrix population model for the viability problem.
\\
 \textbf{5} To illustrate these features, we provide analysis details for two examples whose one of which is a real life example. 
 \end{abstract}

\begin{multicols}{2}

\section{Introduction}

Population viability analysis (PVA) \cite{B84,S87,B92,P00,MD02} aims to manage or predict population development from some ecological, genetic or demographic assumptions. It generally relies on modelling purposes. The simulations are frequently used to work on these questions even if some works highlight the use of Mathematical results for these problems as \cite{Caswell,GL00,LGN07,OM10,ELSW} among many others.

Habitat degradation is pointed out as the major cause of extinction for many threatened species around the world. Yet, for species that are exploited for food or other purposes (e.g., exploited fish populations, bushmeat etc...), or purposely destroyed (e. g., large predators), extinction often occurs in intact habitats. For these cases, the assumption that population dynamics takes place in a fixed habitat seems reasonable. Then, assuming a given environment and a  short period of time without brutal variation of the environment, are we able to propose a model to support decision making for management? 
 
In this paper, we will illustrate how to answer some questions emanating from conservation biology such as:
\begin{itemize}
\item[(I)] What are the odds on saving a population? How probable is extinction? For how long can the population survive? 
\item[(II)] How sharply are numbers expected to fall in the short term? 
\item[(III)] How can we proceed to avoid extinction \cite{CFB,CQLC}, to maintain a population size \cite{SGRFBG} or wipe out some population \cite{BACS, LBBD,GKMMRS}. For instance, if we are able to add some individuals in a population, how many re-introductions are needed to reach a threshold of variability for the population with high probability?
\end{itemize}

These simple questions, when the environment is assumed unchanged/stable are not currently being resolved successfully and we propose a reformulation of classical models that leads to understand the population viability with or without possible actions.

To illustrate this, Cairns, Ross and Taimre in \cite{CRT} underlined that the usual class of diffusion models are simple but lead to inaccurate predictions of critical values
such as the expected time to extinction. They advocated that a more appropriate model may be a discrete-state Markov process
describing the actual number of individuals in the population like the birth–death processes. These models are not used in practice because they are  difficult to work with, from both analytical and computational points of view.

Starting from the famous Matrix population \cite{B41,L42,L45}, we will  introduce the simplest birth-death process of Galton Watson which leads asymptotically to exponential growth and does not take into account some density-dependence dynamics. This assumption is relevant if we address population dynamics in populations at low density, whose dynamics is generally not subject to density dependence factors. On the other hand, these populations may exhibit small numbers, which may preclude the use of deterministic models. We will show step by step how to use it for extinction prediction.

Although extensive literature exists on the frequentist estimation for the Galton-Watson processes (see for instance \cite{AK78,C97,D93,MT05} and references therein), we choose a Bayesian reasoning \cite{R07}. In contrast with a frequentist approach, it has the advantage that a prior with expert knowledge or previous analyses can be introduced in the models. This may be especially useful for small data sets. Indeed, data sets for rare and elusive species, which are often the target of PVA, are generally scarce and incomplete. Moreover, a posterior allows a precise quantification of the uncertainty due to the parameter estimations. This gives posterior probability estimates or credible intervals for the various parameters allowing to build natural scenarios without arbitrary choice. This is of particular interest for practitioners.

Besides, our approach is based on a remarkable relationship between the Galton-Watson processes and Bayesian inference. As we will see, the Galton-Watson processes have multinomial transitions and verify the Markov property, the law of the parameters conditioned according to the historical paths (namely the posterior) can have an explicit form. Indeed, choosing a classical conjugate prior of the multinomial distribution enables recovery of this distribution as a posterior (with different parameters) after one generation and then the Markov property enables to generalize this property for all generations.

This allows to use classical formulas on Galton-Watson processes and integrate it under the posterior to predict quantitatively the behaviour of the population; that is the growth, extinction, relative abundance...

This simple and powerful expression of the posterior is one of the main strengths of our approach. Indeed, this remarkable property seems to be observed only in \cite{MG00} in the mono-type case (i.e. $K=1$), and, to the best of our knowledge, has never been applied in practice with a mono-type nor with many types. We will show how this leads to a simple and powerful way to analyse the viability of the population.
 
\textbf{Outline:} In the two next Sections, Section~\ref{sect:GWintro} and Section~\ref{sect:BGW}, we present the Galton-Watson processes and our stastical model. Section~\ref{sect:BGW} is more technical and contains several formulas and properties of such processes. Section~\ref{sect:stat} states the posterior on which we will apply the properties developed in Section~\ref{sect:BGW}. Section~\ref{sect:exemple} is devoted to two pedagogical examples. the first is based on synthetic data and is concerned with extinction. The second concerns the French bear population and survival, which apart from illustrating our method, is of great interest in is own right. Finally Section~\ref{sect:ext} gives some useful simple extensions.
 
 \begin{tcolorbox}[colframe=white, colback=blue!20]

\textbf{Box 0.} Notation
  \hrule 
\

\begin{itemize}[leftmargin=+0.2cm]
\item $K$ is the number of types.
\item $N(t)= (N_1(t), \dots ,N_K(t))^\intercal$ designes the state vectors (sample size of each type) at time $t$.
\item $p_{i,j}(k)$ is the probability that an individual of type $i$ gives $k$ individuals of type $j$ after one unit of time. It is supposed to be a random number.
\item $\kappa_{i,j}$ is the maximal number of offspring of type $j$ from an individual of type $i$.
\item $(\alpha_{i,j}(k))$ design the prior knowledge in terms of sample size of individuals in type $i$ which gives $k$ individuals of type $j$.
\item $\xi_{i,j,l,t}$ is the (random) number of offspring of type $j$ of the $l$-th individual of type $j$.
\item $M= (M_{i,j})= \left(\sum_{k= 0}^{\kappa_{i,j}} k p_{i,j}(k)\right)$ is the random \textit{mean} matrix.
\item $\lambda$ is the principal eigenvalue of $M$.
\item $u,v$ are respectively the left and right eigenvector of $M$.
\item $T_{\text{Ext}}$ is the (random) extinction time. Namely the duration after the present time $T$ to have extinction.
\item $s_i$ is the probability of extinction starting from one individual with type $i$.
\item $\mathbf{n}=(n_{i,j}(k,t))_{1\leq i,j\leq K, k \geq 0, T\geq t\geq 0}$ are the data. they correspond to the number of individuals of type $i$ with $k$ offspring of type $j$ at time $t$ for all types and times over $\{0,...,T\}$. 
\item $\alpha$ is a statistical threshold. Classicaly $\alpha=0.05$ or $0.1$.
\end{itemize}

\end{tcolorbox}

\begin{tcolorbox}[colframe=white, colback=yellow!20]

\textbf{Box 1.} Pathways to PVA through Bayesian-Galton-Watson approach
  \hrule \  
  
Here, we summarize some steps for carrying out a viability study using data. Several details are given in Appendix~\ref{sect:box1expanded}.
\begin{enumerate}[leftmargin=+0.2cm]
\item  \textbf{Previous knowledge}: Initialize your hyper parameters $(\alpha_{i,j}(k))$.
\item Compute the following quantity of interest:
\begin{itemize}[leftmargin=-0.1cm]
\item \textbf{Short time evolution}: number of individuals:
$$\mathbb{E}[ N(t) \ | \ \mathbf{n}] = \mathbb{E}[M^t \cdot N(0)  \ | \ \mathbf{n}].$$
\item \textbf{Probability of viability}:  
$$\mathbb{P}(\lambda>1 \ | \ \mathbf{n}).$$
If $\lambda<1$ then reintroduction is useless, one has to act on the environment.
\item \textbf{Time to extinction}: if $\lambda <1$, extinction is certain and one can draw
$$
U(t): t \mapsto \mathbb{E}[\lambda^t  \ | \ \mathbf{n} ] \times \frac{\sum_{j=1}^K N_j(0) v_j}{ \min_{1 \leq j \leq K} v_j}.
$$
Indeed when $U(t)$ is below a threshold $\alpha$ (as $1$ or $5 \%$) we know that after this time there is more than $1-\alpha$ that all individuals are extinct; that is $T_{\text{Ext}} \leq t$. An upper bound for $T_{\text{Ext}}$ is given in Section~\ref{sect:box1expanded}.
\item \textbf{Probability of extinction}: if $\lambda>1$ then the population has the capacity to survive but  extinction can occur if the population is too small:
$$
\mathbb{P}(\text{Ext})= \mathbb{E}\left[ s_1^{N_1(0)} \times \cdots \times s_K^{N_K(0)} \ | \ \mathbf{n}  \right]
$$
\item \textbf{How to plan reintroduction}: draw the laws of $s_1$,$\dots$, $s_K$ to see the survival potential of each type and therefore choose to reintroduce the more efficient one.
\end{itemize}

\end{enumerate}

\end{tcolorbox} 

\section{Galton Watson process: a Stochastic Matrix Population Model}
\label{sect:GWintro}
 One of the most famous and powerful mathematical settings for PVA is the Matrix population model. They were introduced by \cite{B41,L42,L45} and they have become a classical tool to predict the evolving size of a population as illustrated in the books \cite{Caswell, TC97}. In this model, a population is classified into discrete types
and its evolution is in discrete time. The state of the population, at a certain time $t\in \mathbb{N}=\{0,1,...\}$, is given
by a (column) vector, $N(t) = (N_1(t), \dots ,N_K(t))^\intercal$, where $K$ is the number of types (and the exponent $\intercal$ designs the usual transposed vector) . This vector evolves from time $t$ to $t + 1$ with the following matrix product:
$$
N(t+1)= A \cdot N(t),
$$
that is, for every type $j$,
\begin{equation}
\label{eq:mpm}
N_{j}(t+1) = \sum_{i=1}^K A_{i,j} N_{i}(t).
\end{equation}
Matrix $A$ is called the population projection matrix and the entry $A_{i,j}$ tells how many individuals of type $i$ appear per individual of type $j$.
Classically the types correspond to the age (or the location), and $A$ then codes the development (or the migration), the birth and the death. 

This classical model makes it possible to understand population variation and resulting extinction or viability. Indeed, let us consider for instance the one-type case $K=1$. We then have
\begin{equation}
\label{eq:K1}
N_1 (t+1) = A_{1,1} N_1(t) \Rightarrow N(t) = A_{1,1}^t N_1(0),
\end{equation}
for all $t\geq 0$, which entails
$$
\lim_{t\to \infty} N_1(t) = \left\{
    \begin{array}{ll}
        0 & \text{if } A_{1,1} <1 \\
        N_1(0) & \mbox{if } A_{1,1} =1 \\
        + \infty & \mbox{if } A_{1,1} >1.
    \end{array}
\right.
$$
In particular when $A_{1,1}>1$ the population is viable and when $A_{1,1}<1$ the population goes into extinction. This result can easily be generalized with $K$ types using the principal eigenvalue of $A$. Namely if $\lambda$ is the largest eigenvalue of $A$ (in modulus) and $N$ the associated eigenvector then $N(t) \underset{t \to \infty}{\sim} \lambda^t N$. The vector $N$ then relates the asymptotic relative abundance although $\lambda$ represents the growth rate. See for instance \cite{Caswell} for details.

In case of extinction, a central question in PVA is to have an estimate of the extinction time. Nevertheless, any prediction of the extinction time is impossible since population size is a deterministic real number. Namely, we have $N_1(t)>0$ for all $t\geq 0$ as soon as $N_1(0)>0$. When $K=1$ and $A_{1,1}<1$, or more generally $\lambda<1$, extinction means $\lim_{t \to \infty} N(t)=0$ but at each time there are some individuals even if this number can be very small. Hence, mathematically the extinction time cannot be defined as the time $T_{\text{Ext}}$ that $N_1(T_{\text{Ext}})=0$ for the matrix population model. The usual setting to quantify time to extinction in this setting is therefore the so-called quasi-extinction time \cite{GSJB,HSVW,ML91,DACMMM}. That is, we fix some arbitrary threshold $\gamma$ under which the number of individuals will be insufficient to ensure persistence of the population, and the associated quasi-extinction time $T_\gamma$ is defined as the moment the population size reaches the level $\gamma$. This threshold has to be chosen in order to take into account demographic stochasticity, Allee effect, \textit{etc}. When $K=1$, Equation~\eqref{eq:K1} gives
$$
T_\gamma= \frac{\log(\gamma) - \log(N_1(0))}{\log(A_{1,1})},
$$
and we see that the problem is that this time crucially depends on the arbitrary choice of $\gamma$.

Another drawback to considering real numbers to model the population size is the unit choice. Indeed does $N_1(t)=1$ mean that there is $1, 10, 10^2, 10^3,...$ individuals? Then how can we use this type of model for small population sizes as in \cite{OURS16,FF10} or \cite[Section 5]{ELSW} for instance? Note that, as deterministic models are proved to be realistic only in large populations, this unit choice has to be large. But how many individuals is enough to be large depends on the model and its parameters. Even if deterministic models can give suitable results for only a hundred individuals, a unit choice of $10^3$ (or even $10^6$) may not be enough to avoid atto-fox type problem \cite{CL12} (that is having  $10^{-18}$ individuals able to avoid extinction).

As pointed out in \cite{GL00,LGN07}, to overcome these problems, it is more natural to consider the population size as a discrete number (namely an integer) as this is the case in Galton-Watson processes.

Multi-type Galton-Watson processes, were introduced in \cite{B1845,GW}, and can be seen as a matrix population model with offspring randomness (but with non-random environment). In this approach, at each time $t$, each individual $l$ of type $i$ gives a discrete random number $\xi_{i,j,l,t} \in \{0, 1, 2, \dots \}$ of child type $j$; namely formula \eqref{eq:mpm} became
\begin{equation}
\label{eq:GW}
N_{j}(t+1) = \sum_{i=1}^K \sum_{l=1}^{N_{i}(t)} \xi_{i,j,l,t}.
\end{equation}
Of course setting 
$$
A_{i,j}(t)= \frac{1}{N_{i}(t)} \sum_{l=1}^{N_{i}(t)} \xi_{i,j,l,t},
$$
we recover the form of equation~\eqref{eq:mpm} that is 
$$
N(t+1)=A(t) \cdot N(t).
$$
With this formalism, $A(\cdot)$ is a random matrix changing at each time. However it is not the same as in \cite[Chapter 3]{TC97} because this sequence of matrices is not a sequence of independent and identically distributed matrices. To see how demography or environment play a role in this matrix notation and to see the importance of considering demographic stochasticity (even in larger populations), we can see \cite{ELSW} and references therein. 

Consistently, as in matrix populations, one can recover the asymptotic type distributions and the increasing rate of the population through decomposition of a matrix, but one can go further and find sharp results on extinction.  The credit that can be given to these predictions can also be estimated. That is, we can give a precise confidence interval for the extinction risk, the extinction time, $etc.$ This again is one of the main differences with matrix population models. The proof of these results can be found in \cite{AN04,KA02,HJV,H02,M16}. Some of them have already been applied for real-life examples; see for instance \cite{CFB} or \cite[page 208]{GL00}. However, besides using the mathematical properties of Galton-Watson processes, we also propose in this work an efficient and consistent statistical approach to using it. Indeed, as noticed in \cite{OM10} "Conducting a population viability analysis involves the steps of choosing an appropriate model, fitting the model to data, and using the fitted model to predict the extinction risk ", the approach that we describe aims to cover all these PVA steps from some life tables or more generally from demographic data.

To conclude, one should keep in mind, that as with matrix population models, the Galton-Watson process crucially depends on the two following strong assumptions:
\begin{enumerate}
\item  interaction between individuals has no effect on the population dynamics;
\item parameters of one individual depend on this individual only through its type. In particular, all individuals of the same type have the same distribution of offspring and this does not vary in time.
\end{enumerate}

\section{Our model: a Galton-Watson process embedded in a Bayesian framework}
\label{sect:BGW}

To define a multi-type Galton-Watson process, we need to introduce some probabilities $p_{i,j}$ on $\{0,1,...\}$. They will represent the progeny of type $j$ from an individual of type $i$. More precisely, an individual oftype $i$ will give, after one unit of time, $k$ individuals of type $j$, with probability $p_{i,j}(k)$. We assume the existence of a maximal number of offspring $\kappa_{i,j}$; that is $p_{i,j}(k)=0$, for all $k> \kappa_{i,j}$. We have then $K^2$ probabilities $p_{1,1},p_{1,2},..., p_{K,K}$ which are represented in a vector form $p_{i,j}= (p_{i,j}(0),..., p_{i,j}(\kappa_{i,j}))$.

For real life problems, these (fundamental) parameters $p_{i,j}$ are unknown. In a frequentist framework,
these parameters of interest would have been assumed to be unknown,
but fixed. Namely, it would be assumed that in the population, for all $i,j$
there is only one true probability distribution $p_{i,j}$. In the Bayesian view of subjective probability, all of these unknowns are treated as uncertain and therefore should be described by a probability distribution over probability distribution.

We will then suppose that $(p_{i,j})_{1\leq i,j \leq K}$ is distributed according to some probability distribution, usually called the prior. All through this paper, this prior will be the convolution of Dirichlet distribution \cite[Section  A.8 p. 521]{R07}; namely $p_{i,j}$ are independent from each other and probability-valued random variables whose law is given for every $i,j$ by
$$
p_{i,j} \sim \mathbf{Dir}\left(\alpha_{i,j}(0),..., \alpha_{i,j}(\kappa_{i,j})\right),
$$
for some sequence $(\alpha_{i,j}(k))_{k=0,...,\kappa_{i,j}}$. This is a classical law on discrete probability whose  expression is given in Appendix \ref{sect:post}. Parameters $(\alpha_{i,j})$ are called hyper parameters and correspond to the information that you want to incorporate in your estimation. They are fixed by the users and can take into account previous studies or expert knowledge.

Among all the choices, two main strategies can be chosen. 

The first one corresponds to the non-informative choice. If we have no information on the ecological system, we wish to decrease the effects of prior outcomes. We then take $p_{i,j}$ as drawn uniformly at random over all probability vectors. This corresponds to the choice $\alpha_{i,j}(0)=\dots=\alpha_{i,j}(\kappa_{i,j})=1$.
Namely, we consider as prior, the uniform law on the simplex of probability measure on a discrete space. This choice is motivated by the minimization of the entropy of these distributions (See \cite[Section 3.2.3]{R07} and \cite[Section 6]{R09}). Roughly, it is the more random choice or the less informative one.

Another choice can consist of putting information in the hyperparameters $(\alpha_{i,j})$. Indeed, as Dirichlet distribution is uni-modal (at least for large parameters) then one can easily choose $(\alpha_{i,j})$ so that $\mathbb{E}[p_{i,j}(k)]$ is equal to a presupposed valued $m_k$ plus or minus a fixed error estimate $\sigma$ (uniform in $k$). Values of $m_k$ and $\sigma$ can for instance be taken from a previous study where data are not available but just the estimator with a confidence interval as in \cite{CQLC,CFB}. 
Then the expert choice is driven by external data and the weight of the expertize is automatic. Finally $(\alpha_{i,j})$ can incorporate belief which does not come from any data. These parameters also represent  the balance between data and expertize.

Details for initializing $(\alpha_{i,j})$ in practice are given in Appendix~\ref{sect:box1expanded}.

Forthwith,  a Galton-Watson process $N= (N(t))_{t=1,2,...}$ is a random vector-valued sequence satisfying equation \eqref{eq:GW}. In this equation, conditionally on $(p_{i,j})$, random variables $\xi_{i,j,l,t}$ are supposed to be independent and $\xi_{i,j,l,t}$ are distributed according to $p_{i,j}$ on $\{0,1,..., \kappa_{i,j} \}$. Namely, 
\begin{align*}
\mathbb{P}&\big(\forall i,j,l,t, \  \xi_{i,j,l,t} \in A_{i,j,l,t} \ | \ (p_{i,j}) \big) \\
&= \prod_{l,t\in \mathbb{N}} \prod_{i,j=1}^K \sum_{k \in  A_{i,j,l,t} } p_{i,j}(k),
\end{align*}
for all sets $A_{i,j,l,t} \subset \mathbb{N}$. This is a discrete-time model. Although there is a continuous-time version of the Galton-Watson processes and our result should hold for continuous times, we restrict ourself to a discrete-time setting because of the general nature of the data. Indeed data is often collected at punctual moments. Number $N(t)$ then represents the number of individuals at  $t$ years,  $t$ days or more relevant choices of times.

An important quantity associated with this model the  random \textit{mean} matrix $M$ whose entries are
\begin{equation}
\label{eq:M}
M_{i,j}= \sum_{k= 0}^{\kappa_{i,j}} k p_{i,j}(k).
\end{equation}

It is a random matrix since $p_{i,j}$ are supposed random (Bayesian framework). This matrix is the counterpart to the population projection matrix in matrix population model \cite{Caswell, TC97}. When $p_{i,j}(0)=1$ for $j\notin \{0,i,i+1\}$ that is one can just get old or reproduce then $M_{i,j}=0$ for $j\notin \{0,i,i+1\}$ and we recover the classical Leslie matrix. Under general assumptions \cite[Section 2.3.1]{HJV}, the evolution of $N(t)$ when $t$ becomes large only depends on the largest eigenvalue $\lambda$ of this matrix and the two associated eigenvectors $u,v$ (with positive coordinates). That are those verifying
\begin{equation}
\label{eq:vp}
uM= \lambda M, \ M v = \lambda v
\end{equation}
and
$$
\sum_{i=1}^K u_i v_i =1, \ \sum_{i=1}^K u_i =1
$$

On the event $\lambda<1$ then the process goes to extinction; that is there exists a finite random time $T_{\text{Ext}}$ such that $N(t)=0$ for all $t\geq T_{\text{Ext}}$ (and $N(T_{\text{Ext}}-1)>0$). Even if one can estimate $T_{\text{Ext}}$ by simulation, sharp analytic bounds have been proved. For instance,
\begin{align}
&\mathbb{P}(T_{\text{Ext}} >t \ | \  N_1(0), \dots , N_K(0), (p_{i,j)} )\label{eq:tps-ext}\\
=\ &\mathbb{P}\left(\sum_{i=1}^K N_i(t)>0 \ | \  N_1(0), \dots , N_K(0), (p_{i,j)} \right)\nonumber \\
\leq \ &\lambda^t \frac{\sum_{i=1}^K v_i N_i(0) }{ \min_{1\leq i \leq K} v_i}.\nonumber
\end{align}
See for instance \cite[Equation (5.54)]{HJV}. This bound allows to find an upper bound of the extinction time with high probability (see Box 1 and Section~\ref{sect:box1expanded}). To our knowledge, there was no equivalent lower bound in great generality. However, \cite[Box 5.2 p. 119]{HJV} gives such a bound when $K=1$. This allows to have a lower bound on the extinction time. In Appendix~\ref{sect:box1expanded}, we generalize this result for $K>1$ and show how use it and \eqref{eq:tps-ext} to bound the extinction time.

On the event $\lambda>1$, extinction occurs with positive probability (but not equal to $1$). In fact, if Ext denotes the extinction event then we have
\begin{align}
\label{eq:proba-ext}
\mathbb{P}&(\text{Ext}  \ | \ N_1(0), \dots , N_K(0), (p_{i,j)})\\ 
&= s_1^{N_1(0)} \times \cdots \times s_K^{N_K(0)},\nonumber
\end{align}
where $s=(s_1,...,s_K)$ is the unique solution of $s=\varphi(s)$ in the simplex $\triangle=\{s \in [0,1]^K \ | \ \sum_{i=1}^K s_i=1 \}$. Function $\varphi= (\varphi_i)_{1\leq i \leq K}$ is the generating function associated with $p$; for $i\in \{1,...,K\}$, it is defined by
\begin{equation}
\label{eq:phi}
\varphi_i(s_1,...,s_K) = \sum_{i_1,...,i_k} s_1^{i_1}p_{i,1}^{i_1} \dots s_K^{i_K} p_{i,K}^{i_K}.
\end{equation}
In particular, when we start with only one individual with type $i$ (that is $N_i(0)=1$ and $N_j(0)= 0$ for $j\neq i$) then
\begin{align*}
\mathbb{P}&(\text{Ext} \ | \ N_1(0), \dots , N_K(0), (p_{i,j)})\\ 
&= s_i.
\end{align*}

Additionally, Process $(N(t))$ is a Markov process. That means that for a current time $T$, the future evolution of the population only depends on the distribution of the population at time $T$ and not on periods of time in the past (conditionally on $(p_{i,j})$ of course). As a consequence, Formula~\eqref{eq:tps-ext} Formula~\eqref{eq:proba-ext} hold when replacing $N(0)$ by $N(T)$ and conditioning on $N(T)>0$. Know that some individuals remain at time $T$, we have to take into account this information.

Finally, in addition to the spectral analysis, all results of a matrix population model (sensitivity analysis,...) can be applied using the matrix $M$ for the study of the Galton-Watson process. Also, for short time, the Galton-Watson process evolves in the same way as this associated deterministic model up to error around $\sqrt{N(t)}$. 

All of these mathematical results can be found in \cite{AN04,KA02,HJV,H02,M16}.

\section{Main result: explicit expression of the posterior}
\label{sect:stat}

We assume that our demographic data has the form of life tables $ (n_{i,j}(k,t))_{1\leq i,j\leq K, k,t\geq 0}$; where $n_{i,j}(k,t)$ corresponds to the number of individuals at time $t$ of type $i$ which gives $k$ individuals of type $j$ at time $t+1$. As for example in Table~\ref{tab:data-sim}. We assume we know the population over the time interval $\{0, \dots, T\}$; namely our data is $\mathbf{n}=(n_{i,j}(k,t))_{1\leq i,j\leq K, k \geq 0, T\geq t\geq 0}$.

Due to the well known conjugation property between multinomial and Dirichlet distribution (see for instance \cite{R07,R09}), and the form of the Galton-Watson transitions, the form of the posterior is simple. More precisely, we can see this for every $i,j$, the posterior of the parameter $p_{i,j}$ conditioned on the data $\mathbf{n}$ are
\begin{align}
\label{eq:post-gen}
&\mathcal{L}(p_{i,j} \ | \ \mathbf{n})\\
=&\mathbf{Dir}\left(\alpha_{i,j}(0) + \sum_{t=0}^T n_{i,j}(0,t),\right. \nonumber\\
&\qquad \qquad  \left. \dots, \alpha_{i,j}(\kappa_{i,j}) + \sum_{t=0}^Tn_{i,j}(\kappa_{i,j},t)\right)\nonumber.
\end{align}
The proof of this formula is given in Appendix~\ref{sect:post}.

This posterior turns into an extensive summary of the
information available on the parameter, integrating simultaneously modelling choices and available observations. To answer our main questions, we must summarize this object into the principal objects of interest (risk of extinction, extinction time, abundance, ...). Box 1 and Appendix~\ref{sect:box1expanded} gives a typical "to-do" list in order to answer some main questions in PVA. We directly use the posterior to calculate precisely the probability of some events. For instance, one can calculate
$$
\mathbb{P}(Ext \ | \ (n)),
$$
directly integrating Equation~\eqref{eq:proba-ext} under the Dirichlet distribution given in Equation~\eqref{eq:post-gen}. This has the advantage of not being conditioned on one $(p_{i,j})$ because we take a mean value of all $(p_{i,j})$ combinations, using our posterior. This perfectly describes the prediction randomness coming from the demographic stochasticity and the statistical estimation.

One possible drawback is that all quantities of interest (abundance...) are random objects and may be difficult to understand or describe. Using the full posterior can then be difficult for practitioners and it can be helpful to sum up the information. We describe here two possible ways of aggregating the posterior according to the objectives of the outcomes: an estimator approach and a scenario approach. 

Firstly, as explained in detail in \cite[Section 4]{R07}, the posterior mean, the posterior mode, the posterior median, $etc.$ can be used to build an estimator of the input of interest. If a choice must be made among the types of estimators above, there is no way
of selecting a best estimator without using an error-type criterion; see  \cite[Section 4]{R07} for details. Note that some of these types of estimator can be easily computed; for instance
\begin{equation}
\label{eq:post-mean}
\mathbb{E}[M_{i,j} \ |\ \mathbf{n} ]= \frac{\sum_{k=1}^{\kappa_{i,j}} k \left( \alpha_{i,j}(0) + \sum_{t=0}^T n_{i,j}(k,t) \right)}{\sum_{k=0}^{\kappa_{i,j}} \left( \alpha_{i,j}(0) + \sum_{t=0}^Tn_{i,j}(k,t) \right)}.
\end{equation} 
The main drawback of this approach is that it loses the error estimation which is naturally embedded in the posterior. However, as soon as these values are fixed one can directly calculate $\lambda$ (estimation of viability), probability of extinction $s_1, s_2$ ...


Finally, an intermediate approach between the full posterior approach and the punctual estimation consists of building scenarios from the posterior. In PVA, in order to make some decisions, it is (more or less) usual to build some scenarios for the future; see for instance \cite{CFB,CQLC}. Nevertheless, in these papers, the variation of the main estimation to build the difference between the scenarios is not so clear. A natural choice for making scenarios that take into account the data and the error (due to the variance) that we make in the estimation is to consider different quantiles of the posterior. For instance, from Table~\ref{tab:ours}, we can choose for $p_{0,1}(1)$, the values $0.67$, $0.82$, $0.93$ for scenarios that underestimate, estimate or overestimate the mortality rate (note that there is no symmetry).





\section{Examples}
\label{sect:exemple}

In this section, we give two examples. The first one based on simulated data aims to show in a pedagogical way our method and compare it with previous study when population goes into extinction. The second example is a real life example. It is concerned with the French Pyrenean brown bear whose viability has a major interest from an ecological and political point of view.

\subsection*{Synthetic data: comparison of methods}
Let us consider a simple population model going to extinction. We assume there is only one type $K=1$ and the maximum number of offspring is $\kappa_{1,1}=3$. We assume that survival and reproduction events are independent. All individuals survive with probability $p_S=0.4$ and have $k$ offspring with probability $p_R(k)$ with $p_R$ given by the vector
$$
p_R=(0.8, 0.1, 0.05,0.05).
$$
Namely each individual gives birth to $1$ child with probability $0.1$, $2$ with probability $0.05$ and $3$ with probability $0.05$. This is a Galton-Watson process with non-random parameters. With our previous notation, we have
\begin{align*}
&p_{1,1}^\star(0) = (1-p_S)p_R(0)= 0.48\\
&p_{1,1}^\star(1)=p_S p_R(0) + (1-p_S) p_R(1)= 0.38\\
&p_{1,1}^\star(2)=p_S p_R(1) + (1-p_S) p_R(2)= 0.07\\
&p_{1,1}^\star(3)=p_S p_R(2) + (1-p_S) p_R(3)= 0.05\\
&p_{1,1}^\star(4)=p_S p_R(3) = 0.02.
\end{align*}
The expected \textit{mean} matrix $M^\star= \sum_{k=0}^4 k p^\star_{1,1}(k)=0.75$ is just a number and $\lambda^\star=M^\star=0.75<1$ and then extinction is certain. We note $M^\star, \lambda^\star$, $p_{1,1}^\star$, $etc.$ for the true parameters instead of $M,\lambda, p_{1,1},$ $etc.$ because we keep this notation for the random variables distributed according to our prior which will serve to estimate $M^\star,\lambda^\star$, $p_{1,1}^\star$, $etc.$ Let us begin by simulating a population for $5$ years to create a learning data set, that is our knowledge on the population. This is resumed in Table~\ref{tab:data-sim}. Focus on this table, we recover that
$$
N(t) =\sum_{k=0}^4 n_{1,1}(k,t),\  N(t+1)= \sum_{k=0}^4 k n_{1,1}(k,t).
$$
For instance, at time $0$ there are $100$ individuals of whom $47$ die without offspring, $39$ survive without offspring or die with one offspring $etc.$ There are then $75$ individuals at time $1$ and so on.

Continuing the simulation we see that the population evolves as follows
$$
N(T+1)=19, \ N(T+2)=N(T+3)=13,
$$
$$
 N(T+4)=8, \ N(T+5)=5, 
$$
$$
\ N(T+6)=1,\ N(T+7)=0.
$$

Therefore $T_{\text{Ext}}^\star=7$ in this sample. The aim of what follows is to give bounds on this quantity using only Table~\ref{tab:data-sim}.

We choose a non-informative prior $\alpha_{1,1}=1$. We then find that, conditionally on data of Table~\ref{tab:data-sim}, $p_{1,1}$ is distributed according to a Dirichlet law with parameter $(145, 128,20,14,8)$. Indeed it is the sum of each line of Table~\ref{tab:data-sim} plus $\alpha_{1,1}=1$. We then find
$$
\mathbb{E}[M\ | \ \mathbf{n}] = 0.7689, \quad \mathbb{P}(\lambda >1 \ | \ \mathbf{n} ) = 10^{-04}.
$$
That is we find that the posterior mean estimator is relatively close to the true value $M^\star=0.75$ and that the population is not viable with a probability very close to $1$.

Now using the method developed in Appendix~\ref{sect:box1expanded} with thresholds $\alpha=0.05$  we find  that the extinction time  should be between $3$ and $31$; which is what happens because $T_{\text{Ext}}=7$.

We compare our results with two other methods. 

The first one is based on a diffusion model \cite{F94}. Using the method and notation of Foley \cite{F94}, we find that the mean and variance of the growth rate are respectively $r_d=-0.3028$ and $v_r=0.0041$ in log scale. We have $e^{r_d}=0.7387$ which is relatively close to $M^\star=0.75$. We then obtain a good parameter estimation. However to estimate the time to extinction, we have to choose a carrying capacity charge. At best, we obtain $36548$ which is far from the real extinction time. This could be expected because we are not in the setting of \cite{F94} which is concerned with a more stable population (not decreasing one) whose extinction arises with rare events (designed as environmental stochasticity by Foley). 

We also use a naive approach based on linear regression. Starting from Equation~\eqref{eq:K1}, we can fit the trend line:
$$
\log(N(t+1)) = r_d + \log(N_t)= r_d t + c_d,
$$
and then look at the time spent given by the upper and lower confidence lines (at $90\%$ threshold)  to reach $0$. The  lower bound is $7$ and the upper bound is $9$. These bounds are correct because they include the true value $T_{\text{Ext}}=7$.

Taking only one simulation has only a qualitative or pedagogical interest. Let us repeat the previous steps $1000$ times to see how many times the extinction is well predicted. With our method the extinction time is in our interval in around $93\%$ of simulations instead of $49\%$ from the naive regression. To end this example, we plot the density of the extinction through these $1000$ simulations in Figure~\ref{fig:ext-sim} and Figure~\ref{fig:histo}. In  Figure~\ref{fig:ext-sim}, we add the normal law with the same first two moments. We see that the distribution of the extinction time is not Gaussian at all: it is not symmetric and has a heavier tail. It is therefore not suitable to search for some values and build a confidence interval through Gaussian quantiles. In Figure~\ref{fig:histo}, we can illustrate the law of our lower and upper bound.

\subsection*{Real data: French Pyrenean brown bears}
 
 The Pyrenean brown bear (Ursus arctos) population is considered as one of the most seriously threatened with extinction in Western Europe. In the 90's and early 2000s, the reinforcement of the population by the introduction of a few individuals from Slovenia allowed to avoid population extinction. Yet, the population remains fragile, and the introduction of new individuals gives rise to tumultuous debates in France.    
  Let us give here a short study on the powerful properties of our approach (and the difference with the preceding example). 
 
The data we used comes from \cite{OURS16}. It corresponds to the exhaustive supervision of all the population between $2005$ and $2016$ of the French Pyrenean brown bears. This population is  split into two different isolated subpopulations. In the western part of the Pyrenees, before 2018, there were only two males (of which only one is indigenous), therefore we will focus on the subpopulation living in the central part of the Pyrenees.

We consider the same structure model as \cite{CQLC}. Namely, we only consider the evolution of the number of females and we consider a population structured by age, with $K=5$ classes. For classes $i \in \{1,\dots, 4\},$ this corresponds to bears whose age is $i-1$ years. For $i=5$ this corresponds to the bears whose age is greater than or equal to $4$ years. The life of a bear is modelled as follows: during its first 4 age stages, it can either die or pass to the next age. When it is in the last fully developed stage, it can die, survive and reproduce. See \cite{CQLC} for biological motivations.

We assume that surviving and reproducing are two independent events even if generally it is assumed that reproduction is possible only when the female parent survive. Both choices are modelization preferences that marginally add a bias to the prediction. It can have an importance mostly for the interpretation of the estimated parameters but we are not doing this here. Indeed, as we estimate our paramaters with this assumption, it decreases the values of reproducing parameters (because we will not see offspring when parents die) but it will increase the offspring predictions (because it is possible to have offspring when parents die). These two facts are balanced because they are assumed for the estimation and for the prediction. In the same way, we assume that females can reproduce each year although it is an exceptional event. This changes nothing because it will naturally divide by $2$ the reproduction rate.

Let us now follow the steps of Box 1.

We start by choosing the information that we want to take into account. Begin by considering the non-informative setting, that is, for all $i \in\{1,...,4\}$ and $k \in\{ 0,1 \}$
$$
\alpha_{i,i+1}(k) = 1,
$$
and for $k\in\{0,...,3\}$
$$
\alpha_{5,1}(k)=1.
$$
Using Formula~\eqref{eq:post-gen}, the posterior law is easily calculable from the data of \cite{OURS16} and is then given in Table \ref{tab:ours}. The Beta distribution can be seen as a particular case of Dirichlet distribution. 

However, we only have 11 years of data and one would think that this is not sufficient to estimate all the parameters we need. Let us show how we can integrate the statistical estimation of \cite{CQLC}. The authors used data coming from different articles  with not necessarily the full description of the estimation procedure. If we look at the infantile mortality rate $p_{1,2}(0)$ then it was based on \cite{WNSF}. Their estimation is $0.4$ for a population of $150$ bear cubs. The resulting estimated variance is then $1.96 \sqrt{0.4 \times 0.6/150}= 0.0784$. To take into account this study, we can then choose
$$
\alpha_{1,2}(0)=1.2417, \qquad \alpha_{1,2}(1)=0.7855.
$$
Using these hyper parameters and the data, we find that the posterior law is a Beta$(18. 2417,3.7855)$. This is very close to the non-informative estimation Beta$(18,4)$ given in Table~\ref{tab:ours}. Hence, even if the sample size data seems limited,there is enough information to limit the sensitivity of the prior. Note that we can nevertheless force $(\alpha_{i,j})$ to be large enough to have an impact. 

We then restrict ourselves to the non-informative prior because it gives similar results and several incomes are unknown as for instance for choosing $(\alpha_{1,5})$.

We can now see the  evolution of the population over short time spans. To illustrate this, let us represent for each year $t$, the prediction we give from the previous years $2005, \dots, t-1$.  We represent these successive predictions in Figure~\ref{fig:moustache}. Each boxplot represents the distribution of the prediction. The mean estimator given by Formula~\eqref{eq:post-mean} was added (diamond symbol). To compare with the genuine evolution of the population, we add the total number of female bears at each year (red circles). In this figure, we can for instance see the learning step of our algorithm.

Let us now focus on $\lambda$; that is the viability problem. For this model the principal eigenvalue $\lambda$ is then the solution of the equation
\begin{align*}
- &\lambda^5 + \lambda^4 p_{5,5}(1) +\\
& \sum_{k=1}^4 k p_{5,1}(k)) p_{1,2}(1)p_{2,3}(1)p_{3,4}(1)p_{4,5}(1)=0,
\end{align*}
 which has nevertheless no explicit solution (this equation comes from the fact that $\lambda$ is just the root of the characteristic polynomial). However, numerically, using the algorithm given in Appendix~\ref{sect:box1expanded} one can find
 $$
  \mathbb{P}(\lambda \leq 1)= 0.012. 
 $$
Hence, with high probability the process is super-critical and the reproduction capacity seems to guarantee viability. However if the population is too small, the survival is not guaranteed. Again, surviving probabilities $s$ in Equation \eqref{eq:proba-ext} have no explicit solutions but are functions of $p_{i,j}$ and one can use the algorithm of Appendix~\ref{sect:box1expanded} to find that, with the current population size, the probability of extinction can be rounded to $0$. It seems that the size of the population and the reproduction capacity is large enough to guaranty survival without reintroduction in the central part of the Pyrenees. Let us emphasize again that this result is only valid under our main assumptions. That includes no change in the environmental conditions including hunting or other human interference. Note that we also assumed no change in the basic demographic parameters that could be due to inbreeding depression. In fact, the current population is highly inbreed \cite{OURS16}.

However one can still see in Figure~\ref{fig:s-ours} that it is more beneficial to reintroduce older bears than younger ones. In particular, in 2018, two females were reintroduced in the western part of the Pyrenees. Assuming similar evolution, there is around $15\%$ risk of extinction for this sub-population (of this two males and 2 reintroduced females). Additionally, if these females are pregnant and then if they survive and give birth each to one female cub then this risk will be divided by $2$ (around $8\%$). Closely related to theses questions, using Formula~\eqref{eq:proba-ext}, we can calculate the effective population size; namely the number of reproducing individuals we need to have an extinction risk lower than a defined threshold. Here, to have an extinction probability lower than $0.05$ then we need at least $5$ reproducing females.

\section{Extension to different types of dataset}
\label{sect:ext}

To apply the present method, we need to know exactly the sum over the time of the number of individuals passing from type $i$ to $j$ at each time. This type of information is almost as difficult to collect as the complete genealogical tree of each individual. Even if  several such datasets exist and to the best of our knowledge, few or no previous methods exhaustively use complete information. It is not always possible to have access to this type of full dataset \cite{OURS16, CFB, FF10}. We explain some extensions to generalize our approach in this section. This was not done simply to clarify. 

In the synthetic example of Section~\ref{sect:exemple}, we have to know about the whole life of each individual (survival and reproduction) but in general we only know the number of survivors and the number of offspring, then we can write
$$
p_{1,1} =  p_S +p_R,
$$
as seen written in this example. However, suppose that $p_S$ and $p_R$ are both Dirichlet distributed and independent, this is not the same modelling framework and with this modelling choice, we can study survival and reproduction separately. 

 
Another generalisation comes from the second example of the French Pyrenean brown bears. In general we do not know the sex of the bear cub before until a few years after their birth. Consequently, we do not know $n_{4,1}(\cdot, t)$ for the last year, $t$. However, one can slightly change the Bayesian model to take into account this missing information. Indeed, one can add some structure and assume that 
$$
p_{4,1}(k)= \sum_{l\geq k} q(l) \dbinom{l}{k} p^k (1-p)^{l-k} ,
$$
where $q$ is the law of the number of spring for a female bear (male and female bear cubs) and $p$ is the probability that a cub is a female. Using this (natural) representation then we keep the conjugation property, and all the posterior remains explicit. For the dataset \cite{OURS16}, this leads, for instance, to $0.583$ as posterior mean estimator for the sex ratio. This is slightly more favourable than the estimation of $0.5$ used in \cite{CQLC}. Using the number of males in the population and supposing that the survival rates are similar can reduce the variance of our estimation; this is a compromise between the data and expertise.

The assumption that we have made, which can be restrictive, is that surviving and reproducing are independent, but for prediction in contrast with parameter interpretation, this is not a real problem. Nevertheless, one can extend our setting to model the necessity of  survival to make reproduction possible. The assumption that all $p_{i,j}$ are independent from each other can be removed. A careful reading of the proof of Formula~\eqref{eq:post-gen} shows that we can extend it for the following setting: for every $k_1,...,k_K$, we replace all $p_{i,j}(k)$ by $p_i(k_1,...,k_K)$ which is the probability that an individual with type $i$ has $k_1$ offspring of type $1$ and $k_2$ offspring of type $2$ $etc.$ It remains to assume that all $p_i$ are independent and keep a Dirichlet law (on $\prod_{j=1}^K \{1,..,\kappa_{i,j}\}$ instead of $\{1,...,\kappa_{i,j}\}$) as prior to have the same type of approach.

Finally, In many examples coming from ecology, the number of offspring may be very large and it is not possible to fix any value $\kappa_{i,j}$ for the maximum numbers of offspring. It is therefore not possible to directly use our approach unless we take a large $\kappa_{i,j}$ (this requires too much information). In order to keep the same idea for our approach,  we can suppose that offspring are given by a Poisson law $\mathcal{P}(m)$ (which arises naturally as $\kappa_{i,j} \to \infty$ as a limit of our model under general assumptions \cite{BHJ}). Also the naturally associated prior will be the Gamma law. It also verifies properties (conjugation, non-informative, expertise, $etc$.) developed in Section~\ref{sect:BGW} and we also have an explicit posterior distribution.

\section{Discussion}
\label{sect:dis}

We have shown how the Galton-Watson process with Bayesian inference is a simple but powerful setting for PVA analysis. On top of giving the same information as the matrix population model, it easily gives precise results for probability of extinction, time to extinction, $etc$.
It is perfectly adapted for small populations and then perfectly completes  previous studies, see \cite{CRT,ELSW,GSJB,HSVW,OM10}. Deterministic models including differential equation and matrix products are well suited to large population (but sometimes more than $10^6$ individuals \cite{CL12}). Stochastic differential equations or hybrid models (including piecewise deterministic models) are more appropriate at a mesoscopic scale; at least for a thousand individuals. Few models are concerned with smaller population size although these populations are more concerned by extinction.

For conservation purposes, there is more and more population monitoring and there is more and more exhaustive information to understand population evolution. Nevertheless, classical statistical methods do not use this exhaustive information to calibrate demographic model. The Bayesian setting we present offers the advantage of taking into account all the information of these types of dataset and can be adapted for missing data or less informative data.

The combination between Galton-Watson processes and Bayesian, and known mathematical results \cite{AN04,HJV,H02,M16} implies simple formulas and there is no need for large simulations to answer simple questions. Results are found without using long computational algorithms with unknown errors. Consequently, it is easily scalable to big data: in contrast with ABC methods \cite{ELMGPRC} or other MCMC methods \cite{S18}, there is no need for a large number of simulations; this permits us to use this method for large datasets. One can easily use this model as part of a larger model to motivate control of the population or others strategies. Furthermore, there is no doubt about the convergence of this type of algorithm and the associated errors due to such computations.

For small data sets, our approach enables to add expert knowledge and gives precise confidence intervals or error estimation for the prediction. This takes into account statistical errors and the demographic stochasticity. However, the environmental stochasticity is not directly included in this estimation even if the Bayesian learning picks up some part of it with the data.
 
This model is founded on several assumptions and in general, some of them can be questionable such as the absence of carrying capacity, the independence of individuals and the static environment but they are very natural considering a small population during short time period. However, genetic aspects that can be of importance for small populations is not taken into account. Considering the example of the bears, we do not speak about the male reintroduction which can have a real interest to avoid inbreeding depression. We focus on demographic aspects and numbers.
Including the inbreeding issue in our setting may be feasible and could provide interesting insights \cite{D68,H02}. 

\end{multicols}
\begin{center}
\begin{table}
\begin{tabular}{|l|c|c|c|c|c|c|}
  \hline
  Times $t$ & 0 & 1 & 2 & 3 & 4 & 5 \\
  \hline
  $N$ & 100 & 75 & 59 & 43 & 33 & 22\\
 $n_{1,1}(0,t)$ & 47 & 30 & 29 & 20 & 18 & \ \\
  $n_{1,1}(1,t)$ & 39 & 37 & 23 & 17 & 11 & \ \\
 $n_{1,1}(2,t)$ & 8 & 4 & 2 & 3 & 2 & \ \\
 $n_{1,1}(3,t)$ & 4 & 2 & 4 & 2 & 1 & \ \\
  $n_{1,1}(4,t)$ & 2 & 2 & 1 & 1 & 1 & \ \\
  \hline
\end{tabular}
\caption{Synthetic data of Section~\ref{sect:exemple}, $N(t)$ represents the number of individuals at time $t$ and $\mathbf{n} = (n_{1,1}(k,t))_{0\leq k \leq 4, 0 \leq t \leq 4}$ the learning data. } 
\label{tab:data-sim}
\end{table}
\end{center}

\begin{center}
\begin{figure}
\includegraphics[scale=0.4]{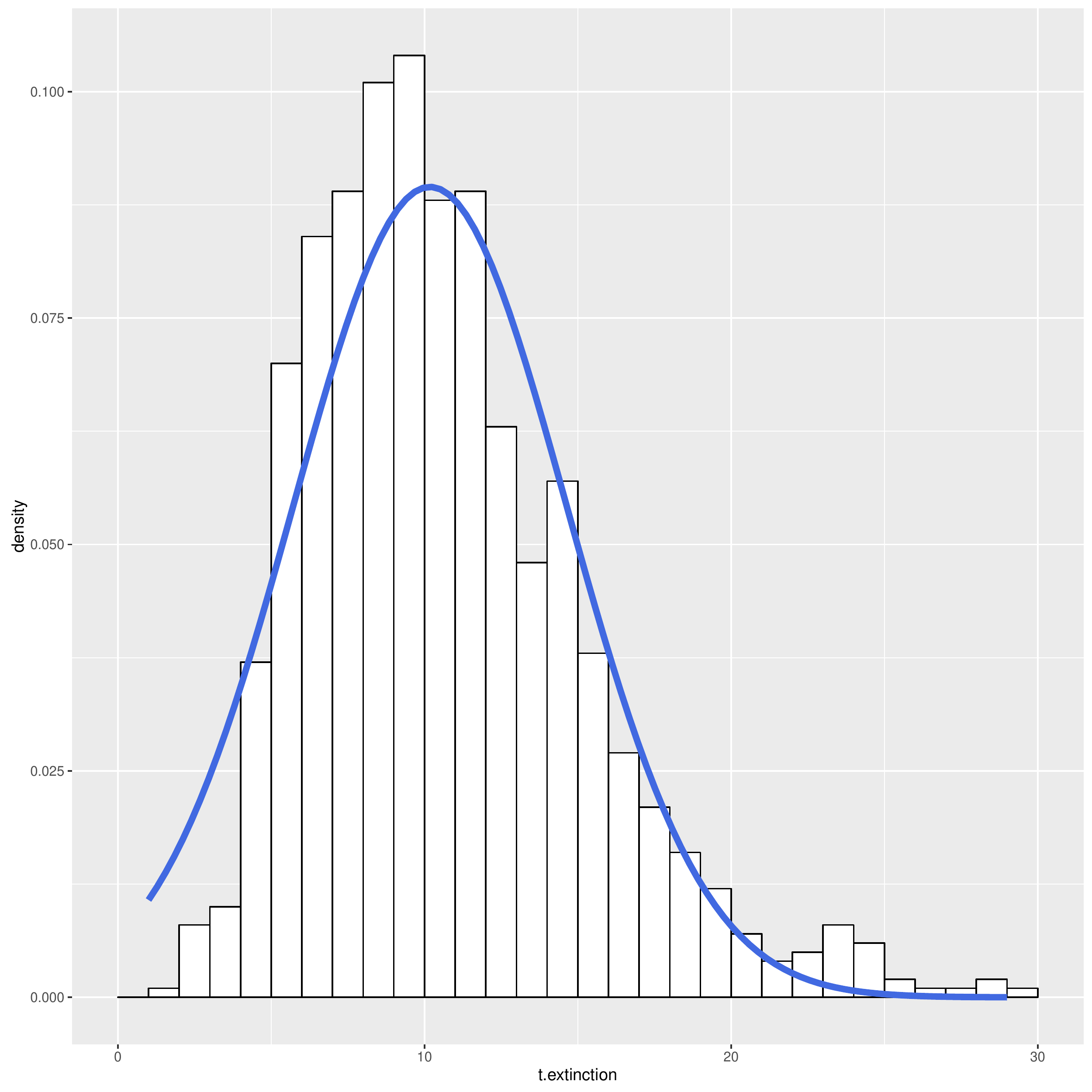}
\caption{The histogram represents the law of the extinction time for the simulated example. The blue curve is the normal distribution with same mean and variance.
}
\label{fig:ext-sim}
\end{figure}
\end{center}

\begin{center}
\begin{figure}
\includegraphics[scale=1]{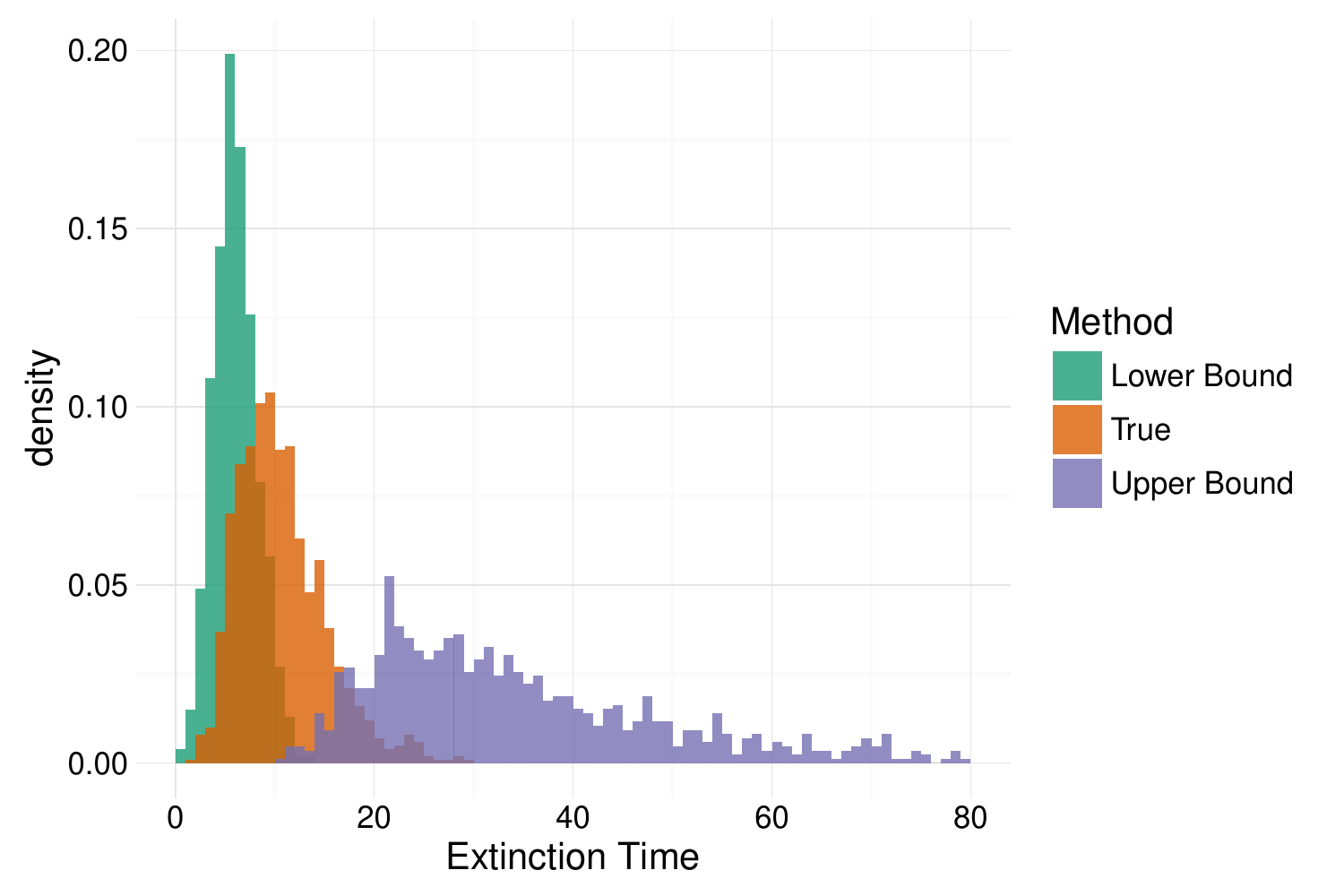}
\caption{The orange histogram represents the law of the extinction time. The other two histograms represent the law of our lower and upper bound. One can see that they intertwine with the beginning and the end of the middle histogram.
}
\label{fig:histo}
\end{figure}
\end{center}

\begin{center}
\begin{table}
\begin{tabular}{|l|c|c|c|}
  \hline
  Parameter & Posterior & Mean & IC 90\%  \\
  \hline
 $p_{0,1}(1)$ & Beta (18,4) & 0.8181 & [0.67,0.93] \\
 $p_{1,2}(1)$  & Beta(17,1) & 0.9444 & [0.84,1] \\
 $p_{2,3}(1)$ & Beta(13,3) & 0.8125 & [0.64,0.94] \\
 $p_{3,4}(1)$  & Beta(13,1) & 0.9286 & [0.79,1]\\
 $p_{4,4}(1)$ & Beta(73,4) & 0.9480 & [0.90,0.98] \\
 $p_{4,0}$ & Dir(71,9,7,1) & \ & \ \\ 
 $p_{4,0}(0)$ & Beta(71,17) & 0.8068 & [0.73,0.87]  \\
 $p_{4,0}(1)$ & Beta(9,79) & 0.1023 & [0.05,0.16] \\
 $p_{4,0}(2)$ & Beta(7,81) & 0.0795 & [0.04,0.13]\\
 $p_{4,0}(3)$ & Beta(1,87) & 0.0114 & [0,0.03] \\
  \hline
\end{tabular} 
\caption{Posterior for all parameters $p_{i,j}(k)$ and the vector valued $p_{4,0}$ for the bears population example.  We use the data \cite{OURS16} and the non-informative prior.} 
\label{tab:ours}
\end{table}
\end{center}

\begin{center}
\begin{figure}
\includegraphics[scale=0.4]{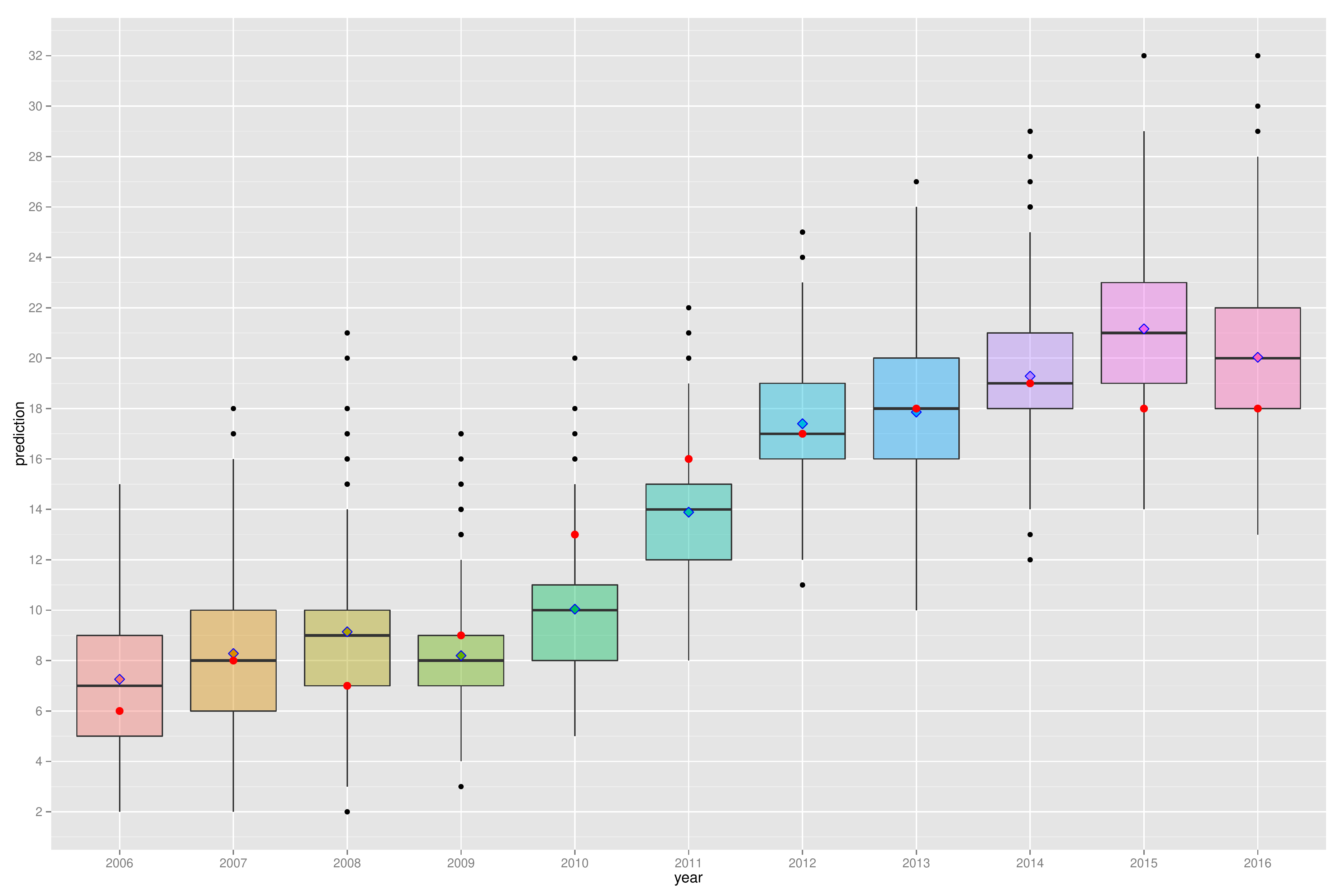}
\caption{Red circles represent the total number of female bears at each year. Boxplots and diamonds both represent prediction of the number pf females at each year considering data available on the previous years and a non-informative prior. The boxplot represents the law of the prediction and diamond represents the mean estimator given by Formula~\eqref{eq:post-mean}.
}
\label{fig:moustache}
\end{figure}
\end{center}

\begin{center}
\begin{figure}
\includegraphics[scale=0.3]{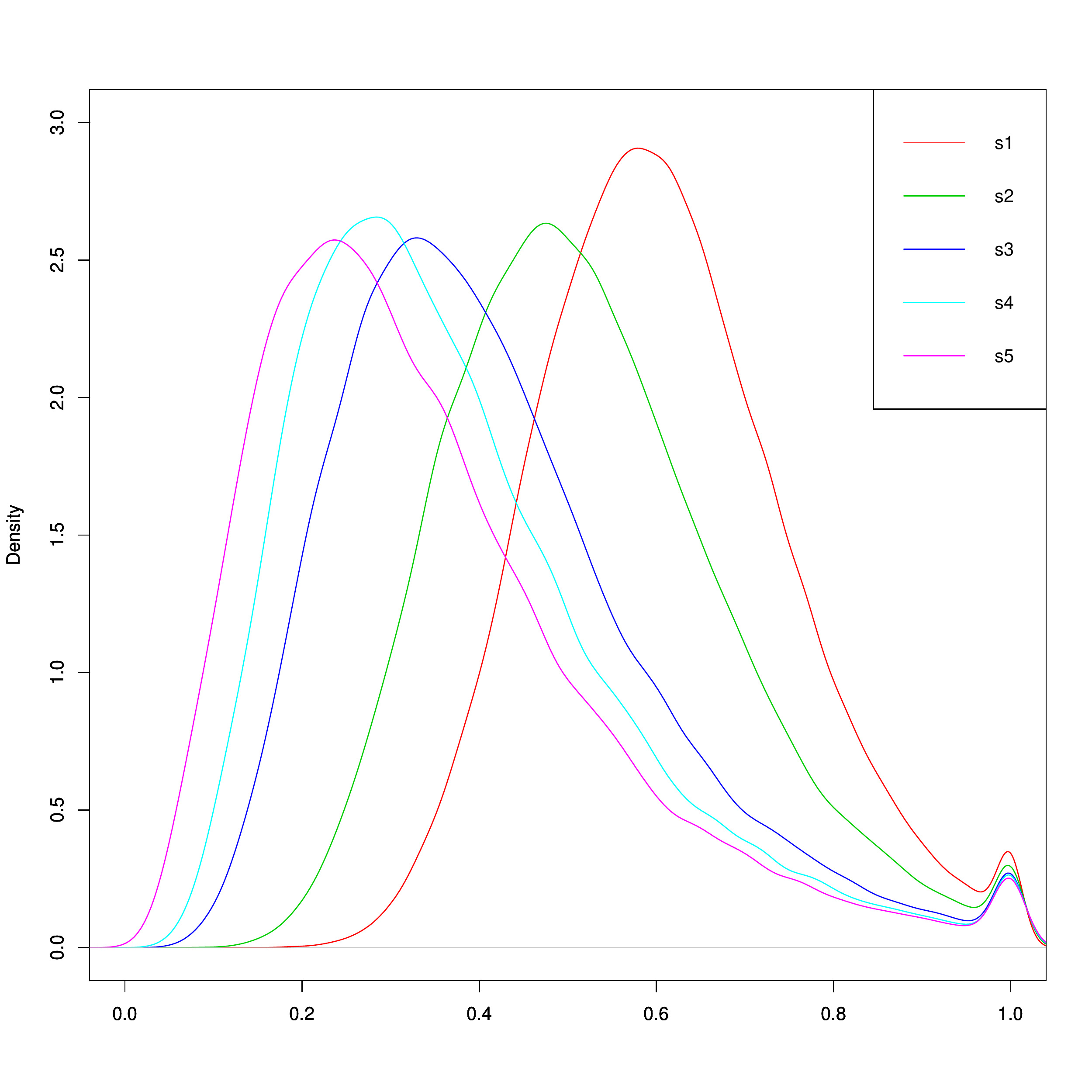}
\caption{Density of parameter $s_1,...,s_5$ for the bear example of Section~\ref{sect:exemple}. Data comes from \cite{OURS16} and we use a non-informative prior.}
\label{fig:s-ours}
\end{figure}
\end{center}


\section{Appendix}
\subsection{Proof of the expression of the Dirichlet distribution}
\label{sect:post}
First  we recall that
$$
p_{i,j} \sim \mathbf{Dir}\left(\alpha_{i,j}(0),..., \alpha_{i,j}(\kappa_{i,j})\right),
$$ 
means that for all $a_1,...a_{\kappa_{i,j}}, b_1,...,b_{\kappa_{i,j}}$, setting $A=[a_1,b_1] \times \cdots \times [a_{\kappa_{i,j}},b_{\kappa_{i,j}}]$, we have
\begin{align*}
\mathbb{P}&(p_{i,j} \in A)\\
&=\int_A \frac{\Gamma (\sum_{k=1}^{\kappa_{i,j}} \alpha_{i,j}(k))}{\prod_{k=1}^{\kappa_{i,j}} \Gamma (\alpha_{i,j}(k))} \\
&\quad \times \prod_{k=1}^{\kappa_{i,j}} x_k^{\alpha_{i,j}(k)-1} dx_1 \dots dx_{\kappa_{i,j}},
\end{align*}
where $\Gamma$ is the classical Gamma function. Then, by definition of the model we have the following prior:
$$
\mathcal{L}(p_{i,j})= \mathbf{Dir}\left(\alpha_{i,j}(0),..., \alpha_{i,j}(\kappa_{i,j})\right).
$$
Let us now detail the likelihood of the model using multinomial laws. Let us denote by $N_{i,j}(t,k)$  the number of individuals with type $i$ having $j$ descendants of type $i$ at time $t$. Namely
$$
N_{i,j}(t,k) = \sum_{l=1}^{N_i(t)} \mathbf{1}_{ \{ \xi_{i,j,l,t}= k \}}.
$$
We have for all $t\geq 0$
\begin{align*}
&\mathcal{L} \left( \left(N_{i,j} (k,t) \right)_{0\leq k  \leq \kappa_{i,j}} \ | \ (p_{i,j}(k))_{0 \leq k \leq \kappa_{i,j} }, N_i(t)  \right) \\
= & \mathbf{M} ( N_i(t), (p_{i,j}(k))_{0 \leq k \leq \kappa_{i,j} } ),
\end{align*}
where $\mathbf{M}(n, q_0,...,q_\kappa)$ denotes the usual multinomial law with $n$ trials and $q_0,...,q_m$ events probabilities. Namely $M\sim \mathbf{M}(n, q_0,...,q_\kappa)$ means that for all $m_0, \dots, m_\kappa$ we have
\begin{align*}
&\mathbb{P} \left( M(0)=m_0, \dots, M(\kappa) =m_\kappa  \right) \\
= & \frac{n !}{m_0! \dots m_\kappa! } q_0^{m_0} \times \dots \times q_\kappa^{m_\kappa},
\end{align*}
when $\sum_{i=0}^\kappa m_i=n$ either it is equal to $0$.
Using these two explicit formulas (multinomial and Dirichlet densities) and the Bayes Theorem, it is easy to check the classical conjugation result:
\begin{align*}
&\mathcal{L} \left( (p_{i,j}(k))_{1 \leq k \leq \kappa_{i,j} },  \ | \  \left(N_{i,j} (k,t) \right)_{1\leq k  \leq \kappa_{i,j}}, N_i(t)  \right) \\
= & \mathbf{Dir}\left(\alpha_{i,j}(0) + N_{i,j} (0,t) ,..., \alpha_{i,j}(\kappa_{i,j}) + N_{i,j} (\kappa_{i,j},t)\right).
\end{align*}
Further details can be found in \cite{R07,R09}. It remains to prove that it holds true for any times through a chain rule type argument. The Markov property (derived by the independence assumption) gives that
\begin{align*}
&\mathcal{L} \left( \left(N_{i,j} (k,t) \right)_{0\leq k  \leq \kappa_{i,j}, 0 \leq t \leq T } \ | \ (p_{i,j}(k))_{0 \leq k \leq \kappa_{i,j} }  \right) \\
= &\mathcal{L} \left( \left(N_{i,j} (k,t) \right)_{0\leq k  \leq \kappa_{i,j}, 0 \leq t \leq T } \ | \ \left(N_{i,j} (k,t) \right)_{0\leq k  \leq \kappa_{i,j} }, (p_{i,j}(k))_{0 \leq k \leq \kappa_{i,j} }  \right) \\
\otimes & \ \mathcal{L} \left( \left(N_{i,j} (k,t) \right)_{0\leq k  \leq \kappa_{i,j}, 0 \leq t \leq T-1 } \ | \ (p_{i,j}(k))_{0 \leq k \leq \kappa_{i,j} }  \right) \\
= &\mathcal{L} \left( \left(N_{i,j} (k,T) \right)_{0\leq k  \leq \kappa_{i,j}} \ | \ N_i(T), (p_{i,j}(k))_{0 \leq k \leq \kappa_{i,j} }  \right) \\
\otimes & \ \mathcal{L} \left( \left(N_{i,j} (k,t) \right)_{0\leq k  \leq \kappa_{i,j}, 0\leq t \leq T-1 } \ | \ (p_{i,j}(k))_{0 \leq k \leq \kappa_{i,j} }  \right) \\
= & \prod_{t=1}^T \mathcal{L} \left( \left(N_{i,j} (k,t) \right)_{0\leq k  \leq \kappa_{i,j}, 0 } \ | \ N_i(t), (p_{i,j}(k))_{0 \leq k \leq \kappa_{i,j} }  \right).
\end{align*}
Using this formula, the expressions of multinomial and Dirichlet densities and Bayes Theorem again finalise the proof of Equation~\eqref{eq:post-gen}. Note that $\mathbf{n}$ that represents the data is a realisation of the random variables $\left(N_{i,j} (k,t) \right)_{0\leq k  \leq \kappa_{i,j}, 0\leq t \leq T }$.

\subsection{Pathways to PVA through the Bayesian-Galton-Watson approach: the expanded version }
\label{sect:box1expanded}
\

Here we give details for each step of Box 1.

\textbf{Previous knowledge} \\
Initializes $(\alpha_{i,j}(k))$ as follows:
\begin{enumerate}
\item[i)] If one cannot pass from state $i$ to state $j$, set
$$
p_{i,j}(0)=1.
$$
or equivalently, in a certain sense,
$$
\alpha_{i,j} (k)=\mathbf{1}_{k=0}.
$$

\item[ii)] If you have no information, set 
$$
\alpha_{i,j} (k) = 1.
$$
\item[iii)] If we use previous estimates and we expect that $p_{i,j} = m_{i,j}(k) \pm \epsilon_{i,j}(k)$ and $\epsilon_{i,j}(k)$ is centred and has variance of magnitude $$\sigma^2_{i,j}  m_{i,j}(k) (1-m_{i,j}(k))$$ then set
$$
\alpha_{i,j} (k) = \frac{(1 - \sigma^2_{i,j}) m_{i,j}(k)}{\sigma^2_{i,j} },
$$
if $\sigma^2_{i,j}<1$ else $\alpha_{i,j} (k) = 1$.
\item[iv)] If we want to impose an expert opinion. The experts think that 
$$
p_{i,j} \approx q_{i,j}
$$
and that their opinion is as important as $M$ data then we can choose
$$
\alpha_{i,j} = M \times q_{i,j}.
$$
\end{enumerate}
We can see in case iii), that this choice enables to match the expectation to the variance. More precisely, to match the expectation, we need 
$$
\frac{\alpha_{i,j}(k)}{S_{i,j}} =m_{i,j}(k)
$$
with $S_{i,j}=\sum_{k=0}^{\kappa_{i,j}} \alpha_{i,j}(k)$. We have then $\alpha_{i,j}(k) =S_{i,j} m_{i,j}(k)$. To match the variance, we have
\begin{align*}
&\frac{S_{i,j} m_{i,j}(k) (S_{i,j}-S_{i,j}m_{i,j}(k)}{S_{i,j}^2(S_{i,j}+1)} \\
= &m_{i,j}(k)(1-m_{i,j}(k)) \sigma_{i,j}^2,
\end{align*}
hence 
$$
S_{i,j}= \frac{1}{\sigma_{i,j}^2} -1 =  \frac{1 - \sigma_{i,j}^2}{\sigma_{i,j}^2}.
$$
Of course we need  $\sigma_{i,j}^2<1$. However, if this is not the case, the noise is too large and the information on $p_{i,j}$ is therefore less informative than no information.

\textbf{Calculus of quantities of interest}

For $K$ ($K=1$ or $2$), one can directly give an expression for almost all quantities through some integral calculus. However, in general, it is difficult to express them. The better way to calculate this quantity is perhaps the following simple Monte Carlo algorithms.

Due to the central limit theorem the error of this simple algorithm can be estimated through an estimator of the variance but in general one cannot fix a number of simulations before doing the calculation. However for the probability of viability and the probability of extinction, the variance can be bounded by $1/4$ and then the error of this simple algorithm is lower than $(4 \sqrt{n_{\text{prec}}})^{-1}$. For an error lower than $0.5\%$ we can take $n_{\text{prec}}=2500$.

Below, simulate $p_{i,j}$ means drawing a random variable with law
$$
\mathbf{Dir}\left(\alpha_{i,j}(0) + \sum_{t=0}^T n_{i,j}(0,t), \dots, \alpha_{i,j}(\kappa_{i,j}) + \sum_{t=0}^Tn_{i,j}(\kappa_{i,j},t)\right).
$$

\textbf{Short time evolution}: 

Do $n_{\text{prec}}$ times the steps
 
\begin{itemize}
\item Simulate the $K^2$ probability valued random variables $p_{i,j}$.
\item Calculate $M^n \cdot X_0$ with $M$ defined by \eqref{eq:M}.
\end{itemize}
Then do a mean value of these proportions.

\textbf{Probability of viability}

Do $n_{\text{prec}}$ times the steps
 
\begin{itemize}
\item Simulate the $K^2$ probability valued random variables $p_{i,j}$.
\item Take for $\lambda$ the largest eigenvalue of the $M$ defined by \eqref{eq:M}.
\end{itemize}
Then count the proportion of $\lambda$ being larger than $1$.

\textbf{Time to extinction}:
To bound the extinction time $T_{\text{Ext}}$ between two values
$$
T_- \leq T_{\text{Ext}} \leq T_+,
$$
with probability $1-\alpha$, for some fixed threshold $\alpha$, do $n_{\text{prec}}$ times the steps
\begin{itemize}
\item Simulate the $K^2$ probability valued random variables $p_{i,j}$.
\item Calculate the function
$$
U(t)= \lambda^t \frac{\sum_{j=1}^K N_j(T) v_j}{ \min(v)}
$$
and 
$$
L(t)= \frac{\max(v)^2}{\min(v)^2} \frac{(1-\lambda)}{\Xi} \lambda^{t+1}  \sum_{j=1}^K N_j(T) v_j,
$$
where
$$
\Xi=\sum_{j=1}^{K} \frac{v_j^2}{\min(v)} \sup_{1\leq i \leq K} \sum_{k=1}^{\kappa_{i,j}}(k^2-M_{i,j}^2) p_{i,j}(k)
$$
 where $v$ is given in Equation~\eqref{eq:vp}.
\end{itemize}
Then do a mean value of functions $U$ and $L$ and choose $(T_-, T_+)$ such that
$$
L(T_-)=\alpha, \qquad U(T_+)= \alpha.
$$
These two functions are based on Equation~\eqref{eq:tps-ext} and Lemma~\ref{lem:tps-ext}.
Note that these bounds on extinction time only hold when $\lambda<1$. To avoid some computational problems due to the fact that $\mathbb{P}(\lambda <1)>0$ (even  if it is very small), it can be useful, when calculating the extinction time conditioned on the event $\{\lambda <1 \}$. In the previous algorithm, this conditioning translates into doing a mean value only for $\lambda$ satisfying $\lambda <1$.

\textbf{Probability of extinction}: 
 
Do $n_{\text{prec}}$ times the steps
 
\begin{itemize}
\item Simulate the $K^2$ probability valued random variables $p_{i,j}$.
\item Find the (vector) solution $s$ of equation $\varphi(s)=s$, with $\varphi$ is the generating function defined in \eqref{eq:phi}. To do this step one can use any classical optimization algorithm.
\item Calculate $s_1^{N_1(T)} \times \cdots \times s_K^{N_K(T)}$.
\end{itemize}
Then do a mean of these quantities.

\textbf{How to plan reintroduction}: 
 Do $n_{\text{prec}}$ times the steps
 
\begin{itemize}
\item Simulate the $K^2$ probability valued random variables $p_{i,j}$.
\item Find the (vector) solution $s$ of equation $\varphi(s)=s$, with $\varphi$ is the generating function defined in \eqref{eq:phi}. To do this step one can use any classical optimization algorithm.
\end{itemize}
Then do a histogram or smoothed density through classical kernel density methods for instance. 

\subsection{Lower bound for the extinction time}

\begin{lem}
\label{lem:tps-ext}
We have
\begin{equation}
\label{eq:lwb}
 \mathbb{P}\left( T_{\text{Ext}} \geq t \right) \geq  \frac{\max(v)^2}{\min(v)^2}   \frac{\lambda^{2t} \left( \sum_{j=1}^K v_j N_j(0) \right)^2}{ \Xi \lambda^{t-1} \left(\frac{1-\lambda^{t}}{1-\lambda} \right) \left( \sum_{j=1}^K v_j N_j(0) \right) + \lambda^{2t} \left( \sum_{j=1}^K v_j N_j(0) \right)^2 },
\end{equation}
where
$$
\Xi=\sum_{j=1}^{K} \frac{v_j^2}{\min(v)} \sup_{1\leq i \leq K} \sum_{k=1}^{\kappa_{i,j}}(k^2-M_{i,j}^2) p_{i,j}(k).
$$
 Moreover, the right-hand side of equation~\eqref{eq:lwb} is equivalent to
$$
\frac{\max(v)^2}{\min(v)^2}   \frac{\lambda^{2t} \left( \sum_{j=1}^K v_j N_j(0) \right)^2}{ \Xi \lambda^{t-1} \left(\frac{1-\lambda^{t}}{1-\lambda} \right) \left( \sum_{j=1}^K v_j N_j(0) \right) + \lambda^{2t} \left( \sum_{j=1}^K v_j N_j(0) \right)^2 }
 \underset{t \to \infty}{\sim} \frac{\max(v)^2}{\min(v)^2}  \lambda^{t+1} \left( \sum_{j=1}^K v_j N_j(0) \right) \frac{(1-\lambda)}{\Xi}.
$$
\end{lem}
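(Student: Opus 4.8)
The plan is to prove \eqref{eq:lwb} by a second moment argument (Paley--Zygmund inequality) applied to the $v$-weighted population size, everything being understood conditionally on $N(0)$ and on the parameters $(p_{i,j})$ (on which $\lambda$, $v$ and $\Xi$ all depend, exactly as in \eqref{eq:tps-ext}). Set $X_t := \sum_{j=1}^K v_j N_j(t)$. Since $v$ has positive coordinates, $\{X_t>0\}=\{N(t)\neq 0\}$, and because a Galton--Watson process is absorbed at $0$ one has $\{N(t)\neq 0\}\subseteq\{T_{\text{Ext}}\geq t\}$, so it suffices to bound $\mathbb{P}(X_t>0)$ from below.

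First I would get the first moment: from \eqref{eq:GW} and \eqref{eq:M}, $\mathbb{E}[N_j(t+1)\mid N(t)]=\sum_i M_{i,j}N_i(t)$, hence, using $Mv=\lambda v$ from \eqref{eq:vp}, $\mathbb{E}[X_{t+1}\mid N(t)]=\sum_i N_i(t)(Mv)_i=\lambda X_t$, and iterating, $\mathbb{E}[X_t]=\lambda^t\sum_{j=1}^K v_j N_j(0)$. Next the second moment: conditionally on $N(t)$ and $(p_{i,j})$ the variables $\xi_{i,j,l,t}$ are independent, so $\mathrm{Var}(X_{t+1}\mid N(t))=\sum_j v_j^2\sum_i N_i(t)\,\mathrm{Var}(\xi_{i,j})$ with $\mathrm{Var}(\xi_{i,j})\leq\sum_{k=1}^{\kappa_{i,j}}(k^2-M_{i,j}^2)p_{i,j}(k)$; bounding $\sum_i N_i(t)\leq X_t/\min(v)$ then gives $\mathrm{Var}(X_{t+1}\mid N(t))\leq \Xi\,X_t$ with $\Xi$ as in the statement. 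Plugging this into the law of total variance yields $\mathrm{Var}(X_{t+1})\leq\lambda^2\mathrm{Var}(X_t)+\Xi\,\mathbb{E}[X_t]=\lambda^2\mathrm{Var}(X_t)+\Xi\lambda^t\sum_j v_j N_j(0)$, and unrolling this geometric recursion from $\mathrm{Var}(X_0)=0$ gives the closed form $\mathrm{Var}(X_t)\leq \Xi\,\lambda^{t-1}\frac{1-\lambda^t}{1-\lambda}\sum_j v_j N_j(0)$.

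To conclude, by Paley--Zygmund $\mathbb{P}(X_t>0)\geq(\mathbb{E}X_t)^2/\mathbb{E}[X_t^2]$ with $\mathbb{E}[X_t^2]=\mathrm{Var}(X_t)+(\mathbb{E}X_t)^2$; substituting the two estimates above produces the rational function on the right of \eqref{eq:lwb}, the prefactor involving $\max(v)$ and $\min(v)$ being produced by the comparison $\min(v)\sum_j N_j(t)\leq X_t\leq\max(v)\sum_j N_j(t)$ used to switch between $X_t$ and the raw count $\sum_j N_j(t)$. The stated asymptotic equivalence is then a one-line check: as $\lambda<1$, both $\lambda^{2t}$ and $\lambda^t$ tend to $0$, so in the denominator the term in $\lambda^{2t}$ and the factor $1-\lambda^t$ are negligible against $\Xi\lambda^{t-1}(\sum_j v_j N_j(0))/(1-\lambda)$, and dividing gives $\frac{\max(v)^2}{\min(v)^2}\lambda^{t+1}(\sum_j v_j N_j(0))\frac{1-\lambda}{\Xi}$.

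The main obstacle is the second-moment step: correctly exploiting the branching/independence structure to evaluate $\mathrm{Var}(X_{t+1}\mid N(t))$, bounding it by a constant multiple of $X_t$ with precisely the constant $\Xi$, and solving the resulting linear recursion in closed form so that the answer matches \eqref{eq:lwb} on the nose; the first-moment identity, the Paley--Zygmund inequality and the final asymptotics are routine. A minor point to keep in mind is that the whole statement is conditional on $(p_{i,j})$ (and on $N(0)$), so no integration against the Dirichlet posterior is involved at this stage — that averaging is carried out afterwards, numerically, as in the algorithm of Appendix~\ref{sect:box1expanded}.
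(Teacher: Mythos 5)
Your proposal follows essentially the same route as the paper's own proof: a second-moment (Paley--Zygmund/Cauchy--Schwarz) bound applied to the $v$-weighted population $X_t=\sum_j v_j N_j(t)$, the first moment $\mathbb{E}[X_t]=\lambda^t\sum_j v_jN_j(0)$ via $Mv=\lambda v$, a conditional-variance computation exploiting the independence of the $\xi_{i,j,l,t}$ that yields $\mathrm{Var}(X_{t+1}\mid N(t))\leq \Xi X_t$ with exactly the constant $\Xi$ of the statement, the linear recursion $\mathbb{E}[X_{t+1}^2]\leq\lambda^2\mathbb{E}[X_t^2]+\Xi\,\mathbb{E}[X_t]$ unrolled to the same closed form, and the same elementary asymptotics; working with the variance recursion instead of the raw second moment is an immaterial repackaging.

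The one point where your argument does not deliver the statement as written is the prefactor $\max(v)^2/\min(v)^2$. You claim it comes from the comparison $\min(v)\sum_j N_j(t)\leq X_t\leq\max(v)\sum_j N_j(t)$, but carrying that comparison out gives the reciprocal factor $\min(v)^2/\max(v)^2\leq 1$ (bounding $\mathbb{E}[\sum_jN_j(t)]$ from below by $\mathbb{E}[X_t]/\max(v)$ and $\mathbb{E}[(\sum_jN_j(t))^2]$ from above by $\mathbb{E}[X_t^2]/\min(v)^2$), while applying Paley--Zygmund directly to $X_t$ (legitimate, since $\{X_t>0\}=\{\sum_jN_j(t)\geq1\}$) gives the bound with constant $1$ and no factor at all. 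A factor $\max(v)^2/\min(v)^2\geq1$ strengthens the inequality beyond what the second-moment method provides and cannot be obtained by the switch you describe. To be fair, the paper's own proof inserts this same factor in its first line without derivation, so your write-up reproduces rather than creates this gap; but as a self-contained argument you should either drop the prefactor (proving the clean bound $\mathbb{P}(T_{\text{Ext}}\geq t)\geq \mathbb{E}[X_t]^2/\mathbb{E}[X_t^2]$, which is what the computation actually supports) or replace it by $\min(v)^2/\max(v)^2$ if you insist on routing through the raw count. Everything else, including the identification of $\Xi$, the closed-form denominator and the equivalence as $t\to\infty$ when $\lambda<1$, is correct and matches the paper.
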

\begin{proof}
The proof is inspired from \cite[Box 5.2 p.119]{HJV} We have
\begin{align*}
 \mathbb{P}\left( \sum_{j=1}^K N_j(t) \geq 1 \right)
  &\geq \frac{\max(v)^2}{\min(v)^2}  \frac{\mathbb{E}[X(t)]^2}{\mathbb{E}[X(t)^2]},
\end{align*}
where
$$
X(t)= \sum_{j=1}^K v_j N_j(t) .
$$
Yet, we have $\mathbb{E}[X(t)] = \lambda^t \sum_{j=1}^{K} v_j N_j(0)$, and it remains to study the second moment. Recall 
$$
N_j(t+1) = \sum_{i=1}^K \sum_{l=1}^{N_{i}(t)} \xi_{i,j,l,t}.
$$
Setting $\mathbf{N}(t) = \sum_j N_j(t)$ for the population size and reindexing the population, we find
$$
N_j(t+1) = \sum_{l=1}^{\mathbf{N}(t)} \zeta_{i(l),j,l,t},
$$
where $i(l)$ is the type of the individual $l$ and $\zeta$ its offspring. We have
\begin{align*}
\mathbb{E}[X(t+1)^2] &= \mathbb{E} \left[\left( \sum_{l=1}^{\mathbf{N}(t)} \sum_{j=1}^{K} v_j \zeta_{i(l),j,l,t} \right)^2\right]\\
 &= \mathbb{E} \left[ \sum_{l,l'=1}^{\mathbf{N}(t)} \sum_{j,j'=1}^{K} \left(  v_j \zeta_{i(l),j,l,t} \right)\left(  v_{j'} \zeta_{i(l'),j',l',t} \right)\right]\\
  &= \mathbb{E} \left[ \sum_{l,l'=1}^{\mathbf{N}(t)} \sum_{j,j'=1}^{K} \mathbb{E} \left[\left(  v_j \zeta_{i(l),j,l,t} \right)\left(  v_{j'} \zeta_{i(l'),j',l',t} \right]\right])\right]\\
 &= \mathbb{E} \left[ \sum_{l,l'=1}^{\mathbf{N}(t)} \sum_{j,j'=1}^{K} v_j v_{j'} (\mathbf{1}_{l=l'} M^{(2)}_{i(l),j,j'} + \mathbf{1}_{l\neq l'} M_{i(l),j} \times M_{i(l'),j'})\right] \\
  &= \mathbb{E} \left[  \sum_{j,j'=1}^{K} v_j v_{j'} \left( \sum_{i=1}^K N_i(t) M^{(2)}_{i,j,j'} + \sum_{i,i'=1}^K  N_i(t)  N_{i'}(t) M_{i,j} \times M_{i',j'}- \sum_{i=1}^K  N_i(t) M_{i,j} \times M_{i',j'}\right)\right].
\end{align*}
In the third line we conditioned on $\mathbf{N}(t)$  and use the independance property. In the fourth line, we use the notation
$$
M^{(2)}_{i(l),j,j'} = \mathbb{E} \left[\left(  v_j \zeta_{i(l),j,l,t} \right)\left(  v_{j'} \zeta_{i(l'),j',l',t} \right) \right],
$$
and then under our assumptions 
$$
M^{(2)}_{i(l),j,j'}= M_{i(l),j} M_{i(l),j'}, 
$$
if $j\neq j'$ and either
$$
M^{(2)}_{i(l),j,j}= \sum_{k=0}^{\kappa_{i(l),j}} k^2 p_{i,j}(k).
$$
Then
\begin{align*}
 &\sum_{j,j'=1}^{K} v_j v_{j'} \left( \sum_{i=1}^K N_i(t) M^{(2)}_{i,j,j'} + \sum_{i,i'=1}^K  N_i(t)  N_{i'}(t) M_{i,j} \times M_{i',j'}- \sum_{i=1}^K  N_i(t)^2 M_{i,j} \times M_{i',j'}\right)\\
 = &\lambda^2 X(t)^2  +\sum_{j,j'=1}^{K} v_j v_{j'} \left( \sum_{i=1}^K N_i(t) M^{(2)}_{i,j,j'} - \sum_{i=1}^K  N_i(t) M_{i,j} \times M_{i',j'}\right)\\
  = &\lambda^2 X(t)^2  +\sum_{j=1}^{K} \sum_{i=1}^K N_i(t) v_j^2 \left(M^{(2)}_{i,j,j'} -  M_{i,j} \times M_{i',j'}\right)\\
    = &\lambda^2 X(t)^2  +\sum_{j=1}^{K} \sum_{i=1}^K N_i(t) v_j^2 \sigma^2_{i,j}.
\end{align*}
and also setting $\overline{\sigma_j^2} =\sup_i\sigma^2_{i,j} $
\begin{align*}
\sum_{j=1}^{K} \sum_{i=1}^K N_i(t) v_j^2 \sigma^2_{i,j}
&\leq  \sum_{i=1}^K N_i(t) v_i \sum_{j=1}^{K} v_j^2 \overline{\sigma_j^2}/\min(v).
\end{align*}
Finally setting $\Xi=\sum_{j=1}^{K} v_j^2 \overline{\sigma_j^2}/\min(v)$, we obtain
\begin{align*}
\mathbb{E}[X(t+1)^2] 
&\leq \lambda^2 \mathbb{E}[X(t)^2] + \Xi  \mathbb{E}[X(t)]\\
&\leq \lambda^2 \mathbb{E}[X(t)^2] + \Xi \lambda^t X(0) \\
&\leq \Xi \lambda^t \left(\frac{1-\lambda^{t+1}}{1-\lambda} \right) X(0) + \lambda^{2(t+1)} X(0)^2 
\end{align*}
where we used an iteration argument in the third line. To conclude

\begin{align*}
 \mathbb{P}\left( \sum_{j=1}^K N_j(t) \geq 1 \right)
  &\geq \frac{\max(v)^2}{\min(v)^2}   \frac{\lambda^{2t} X(0)^2}{ \Xi \lambda^{t-1} \left(\frac{1-\lambda^{t}}{1-\lambda} \right) X(0) + \lambda^{2t} X(0)^2 }\\
  &\sim \frac{\max(v)^2}{\min(v)^2}  \lambda^{t+1} X(0) \frac{(1-\lambda)}{\Xi}.
\end{align*}

\end{proof}

\bibliographystyle{abbrv}
\bibliography{ref}

\end{document}